\def\BibTeX{{\rm B\kern-.05em{\sc i\kern-.025em b}\kern-.08em
    T\kern-.1667em\lower.7ex\hbox{E}\kern-.125emX}}
\theoremstyle{plain}
\newtheorem{theorem}{Theorem}
\newtheorem{lemma}[theorem]{Lemma}
\theoremstyle{definition}
\newtheorem{definition}[theorem]{Definition}
\newcommand{\Prop}{\mathsf{Prop}}
\newcommand{\card}{\mathsf{card}}
\newcommand{\until}{\mathbin{\mathsf{U}}}
\newcommand{\next}{\mathord\mathsf{X}}
\newcommand{\more}{\mathbin{\ll}}
\newcommand{\same}{\mathbin{\sim}}
\newcommand{\ltlp}{\mathsf{LTL}^{\more}}
\newcommand{\ltl}{\mathsf{LTL}}
\newcommand{\qin}{q_\text{in}}
\newcommand{\pre}{\mathsf{pre}}
\begin{document}

\title{The logic of temporal domination
}

\author{Thomas Studer}
\date{}

\maketitle

\begin{abstract}
In this short note, we are concerned with the fairness condition ``A~and~B hold almost equally often", which is important for specifying and verifying the correctness of non-terminating processes and protocols.
We introduce the logic of temporal domination, in which the above condition can be expressed. We present syntax and semantics of our logic and show that it is a proper extension of linear time temporal logic. In order to obtain this result, we rely on the corresponding result for k-counting automata.\\

\noindent
{\bf Keywords:} k-counting automata, extensions to regular $\omega$-languages, linear time logic
\end{abstract}

\section{Introduction}

Temporal logic is a highly useful formalism for specifying and verifying correctness of computer programs, in particular for reasoning about non-terminating concurrent programs such as operating systems and communication protocols~\cite{pnueli,Emerson:1991}.
In this context, it is important to express how often a proposition~$A$ holds (an event~$A$ occurs) in relation to how often a proposition~$B$ holds (an event~$B$ occurs). For instance, the statement
\begin{equation}\label{eq:domination:1}
\text{$A$ and $B$ hold almost equally often}
\end{equation}
is a \emph{fairness} condition that guarantees that neither event $A$ nor~$B$ dominates the other event. 

In the present note, we introduce an extension $\ltlp$ of linear time temporal logic~$\ltl$ by a new binary temporal operator~$\more$. 
A formula $A \more B$ of $\ltlp$ roughly means that 
\[
\text{$B$ holds infinitely more often than $A$.} 
\]
Hence if we define 
\[
A \same B := \lnot (A \more B) \land \lnot (B \more A)
\rlap{\enspace,}
\]
then $A \same B$ expresses that~\eqref{eq:domination:1} holds.

We present syntax and semantics of $\ltlp$, study some basic properties, and prove that it is a proper extension of $\ltl$. This result relies on the close relationship of $\ltlp$ and $k$-counting automata.

Allred and Ultes-Nitsche~\cite{AllredU12,Ultes-NitscheA11} introduced
$k$-counting automata  
as recognizers for $\omega$-languages, i.e., languages over infinite words. They showed that the class of $\omega$-languages that are accepted by $k$-counting automata is a proper superclass of the $\omega$-regular languages~\cite{thomas90}.

The class of  $\omega$-regular languages is well studied. It can be defined in terms of $\omega$-regular expressions~\cite{5,thomas90}, B\"uchi automata~\cite{5}, Muller automata~\cite{15}, and many more formalisms. It is of high practical importance that $\omega$-regular languages are closed under Boolean operations, i.e., given $\omega$-regular languages, their union, intersection, and complement are $\omega$-regular, too (see~\cite{AllredU18} for a recent algorithm to compute the complement). Moreover, $\omega$-regular languages can be effectively tested for emptiness, i.e., given a description of an $\omega$-regular language in one of the formalisms above, it can be tested whether it is the empty language. As a consequence of the Boolean closure and the effective emptiness test, containment of $\omega$-regular languages can also be effectively tested by making use of
\[
A \subseteq B \quad\text{iff}\quad A \cap \overline{B} = \emptyset
\]
where $\overline{B}$ denotes the complement of $B$.

$k$-counting automaton refers to a deterministic finite-state machine model that is equipped with $k$-many counters. The acceptance condition makes statements about the boundedness or unboundedness of the counters in infinite runs of the automaton (see Definition~\ref{d:acc:1}). 
There is an important difference between $k$-counting automata and multi-counter machines~\cite{14,7,9}. The transition function in multi-counter machines depends on whether the counters are zero or not. This zero-test makes multi-counter machines Turing-complete and thus any interesting property about languages accepted by multi-counter machines is undecidable. The transition function in $k$-counting automata, in contrast, is independent of the values of the counters. Although it can change the counters, it does not read them. This makes  $k$-counting automata less powerful, but in exchange many interesting properties can remain decidable~\cite{AllredU12}.

Note that one of the properties that, unfortunately, cannot be expressed with $k$-counting automata is
\[
\text{there are never more $b$ than $a$ in any prefix of an $\omega$-word.}
\]
Such a property would make it possible, for example, to express that in an operating system one cannot kill more processes than one has started before. To express a property of this kind, we would need transitions that are only enabled when some counter is not zero. Because of closure under Boolean operations, this would also yield transitions that depend on a successful zero-test, which would be too strong (since it would give us undecidable multi-counter machines).

Allred and Ultes-Nitsche~\cite{AllredU12,Ultes-NitscheA11} show that the class of $\omega$-languages accepted by $k$-counting automata  
is closed under Boolean operations. Further they conjecture that the emptiness problem is decidable and they provide some formal arguments for this conjecture.

\section{$k$-counting automata}

The logic $\ltlp$ is inspired by, and closely related to, $k$-counting automata. In this section we will briefly recall the definition of $k$-counting automata and state their main properties.

\begin{definition}[$k$-counting automaton]
A \emph{$k$-counting automaton} is a tuple $(Q,C,\Sigma,\delta,\qin,\Phi)$ where
\begin{enumerate}
\item $Q$ is a finite set of states;
\item $C = \{c_0,\ldots,c_{k-1}\}$ is a set of $k$-many counters;
\item $\Sigma$ is a finite set of symbols;
\item $\delta : Q \times \Sigma \to Q \times 2^C \times 2^C$ is a transition function 
	such that for all $p \in Q$ and $a \in \Sigma$, 
\[
\delta(p,a) = (q,C_+,C_-) \text{ implies } C_+ \cap C_- = \emptyset;
\]
\item $\qin \in Q$ is an initial state; 
\item the acceptance condition $\Phi$ is a Boolean combination of the set of atomic propositions 
\[
\{ c_+ \ |\ c \in C\}\cup \{c _- \ |\ c \in C\}.
\]
\end{enumerate}
\end{definition}

\begin{definition}[Run]
Let $A=(Q,C,\Sigma,\delta,\qin,\Phi)$ be a $k$-counting automaton.
A \emph{counter valuation} $v$ is a mapping $v : C \to \mathbb{Z}$, where
$v(c) \in \mathbb{Z}$ denotes the value of counter~$c$ under the counter valuation $v$.
Let $V$ be the set of all counter valuations.

A \emph{run} $r \in (Q \times V)^\omega$ of $A$ on some $\omega$-word 
\[
x=x_0 x_1 x_2 \ldots \in \Sigma^\omega
\]
is a sequence of pairs 
\[
(q_0,v_0) (q_1,v_1) (q_2,v_2)\ldots
\]
such that
	\begin{gather*}
	q_0 = \qin \\
	(\forall  0\leq i\leq k-1)v_0(c_i)=0 \rlap{\enspace,}
	\end{gather*}
and, for all $j\geq 0$
\[
\delta(q_j,x_j) = (q_{j+1},C_+,C_-)
\]
and
\begin{gather*}
(\forall i \in C_+) v_{j+1}(c_i) = v_j(c_i)+1\rlap{\enspace,}\\
(\forall i \in C_-) v_{j+1}(c_i) = v_j(c_i)- 1\rlap{\enspace,}\\
(\forall i \in C \setminus (C_+ \cup C_-)) v_{j+1}(c_i) = v_j(c_i)
\rlap{\enspace.}
\end{gather*}
\end{definition}

\begin{definition}[Accepting run]\label{d:acc:1}
For the set 
\[
\{ c_+ \ |\ c \in C\}\cup \{c _- \ |\ c \in C\}
\]
of atomic propositions, we define satisfaction of atomic propositions in a run
\[
r = (q_0,v_0) (q_1,v_1) (q_2,v_2)\ldots
\]
as follows.
For all $c \in C$
\[
r \models c_+ \quad\text{if{f}}\quad
(\forall m \in \mathbb{Z})(\exists j>0)v_j(c)>m
\]
and
\[
r \models c_- \quad\text{if{f}}\quad
(\forall m \in \mathbb{Z})(\exists j>0)v_j(c)<m
\rlap{\enspace.}
\]
Based on this interpretation of atomic propositions and using the standard  semantics for Boolean connectives, we say a run~$r$ is \emph{accepting} if and only if 
\[
r\models \Phi
\rlap{\enspace.}
\]
\end{definition}
The above definition implies
\[
r \models \lnot c_+ \quad\text{if{f}}\quad
(\exists m \in \mathbb{Z})(\forall j>0)v_j(c)\leq m
\]
and
\[
r \models \lnot c_- \quad\text{if{f}}\quad
(\exists m \in \mathbb{Z})(\forall j>0)v_j(c)\geq m
\rlap{\enspace.}
\]
So an atomic proposition $c_+$ is satisfied if the counter $c$ is positively unbounded and $c_-$ is satisfied if $c$ is negatively unbounded. Similary, $\lnot c_+$ is satisfied if the counter $c$ is positively bounded and $\lnot c_-$ is satisfied if $c$ is negatively bounded. 

\begin{definition}[Accepted language]
An $\omega$-word $x$ is accepted by $A$ if and only if there exists an accepting run of $A$ on the word~$x$.
The \emph{$\omega$-language accepted by $A$} is the set of all accepted $\omega$-words.
\end{definition}

Let us now briefly summarize the main results about $k$-counting automata, see \cite{AllredU12,Ultes-NitscheA11} for proofs.

\begin{lemma}
Any regular $\omega$-language is accepted by some $k$-counting automaton, for some $k>0$.
\end{lemma}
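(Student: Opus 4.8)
The plan is to reduce to a deterministic acceptance mechanism for $\omega$-regular languages and then read that mechanism off from the boundedness behaviour of counters. Recall that, by the classical determinisation theorem (see, e.g., \cite{thomas90}), every regular $\omega$-language is recognised by some deterministic Muller automaton $M = (Q,\Sigma,\delta_M,q_0,\mathcal{F})$ with $\mathcal{F}\subseteq 2^Q$, where a run is accepting iff the set of states it visits infinitely often belongs to $\mathcal{F}$. Since a $k$-counting automaton has a deterministic transition function that does not inspect the counters, we can keep $M$'s transition structure unchanged; the whole task is to encode ``the set of states visited infinitely often lies in $\mathcal{F}$'' as a Boolean condition on counter boundedness.

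Concretely, I would take $k := |Q|$ and introduce one counter $c_q$ per state, so $C = \{\, c_q \mid q \in Q \,\}$. Define the transition function of the $k$-counting automaton by
\[
\delta(p,a) := \bigl(\delta_M(p,a),\ \{\, c_{\delta_M(p,a)} \,\},\ \emptyset\bigr)
\rlap{\enspace,}
\]
that is, on every move increment exactly the counter of the target state and decrement nothing; the side condition $C_+ \cap C_- = \emptyset$ then holds trivially. Set $\qin := q_0$. A straightforward induction on the length of prefixes shows that on an input $x \in \Sigma^\omega$ the unique run $r = (q_0,v_0)(q_1,v_1)\cdots$ of this automaton has $q_0 q_1 q_2 \cdots$ equal to the unique run of $M$ on $x$, and that $v_j(c_q)$ counts the occurrences of $q$ among $q_1,\dots,q_j$.

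Since counters are never decremented, $v_j(c_q)$ is non-decreasing in $j$, so $r \models (c_q)_+$ holds iff $c_q$ is incremented infinitely often, i.e.\ iff $q$ occurs infinitely often in the run, i.e.\ iff $q \in \mathrm{Inf}(x)$, where $\mathrm{Inf}(x)$ denotes the set of states of $M$ visited infinitely often on $x$. Taking
\[
\Phi := \bigvee_{F \in \mathcal{F}} \Bigl(\ \bigwedge_{q \in F} (c_q)_+\ \land\ \bigwedge_{q \notin F} \lnot (c_q)_+\ \Bigr)
\rlap{\enspace,}
\]
which is indeed a Boolean combination of the atomic propositions $\{\, (c_q)_+ \mid q \in Q \,\}$, we obtain $r \models \Phi$ iff $\mathrm{Inf}(x) \in \mathcal{F}$ iff $x \in L(M)$. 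Hence the $k$-counting automaton accepts exactly the given language, and $k = |Q| \ge 1 > 0$ because $q_0 \in Q$.

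I do not anticipate a real obstacle: the correctness argument is essentially bookkeeping about monotone counters. The one point genuinely worth care -- and the reason the statement is not completely immediate -- is that $k$-counting automata are deterministic, so one cannot start from nondeterministic B\"uchi automata and must first invoke determinisation to obtain a deterministic (Muller, or equivalently parity/Rabin) automaton; the Muller condition is the most convenient to encode, since its ``visited infinitely often'' phrasing matches the positive unboundedness of the per-state counters verbatim.
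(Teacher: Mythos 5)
Your construction is correct. Note that the paper does not prove this lemma at all: it is stated as a known result with the proof deferred to Allred and Ultes-Nitsche \cite{AllredU12,Ultes-NitscheA11}, so there is no in-paper argument to compare against. Your route is the natural one and it works: since $k$-counting automata are deterministic and never read their counters, you must start from a deterministic acceptor, and invoking McNaughton-style determinisation to a deterministic Muller automaton is exactly the right move; then one counter $c_q$ per state, incremented on every entry into $q$ and never decremented, is monotone, so $r \models (c_q)_+$ holds precisely when $c_q$ is incremented infinitely often, i.e.\ when $q$ is visited infinitely often, and your $\Phi$ is a faithful Boolean encoding of the Muller condition $\mathrm{Inf}(x) \in \mathcal{F}$. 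Determinism also means the unique run is the only candidate, so the existential acceptance clause causes no trouble. Two cosmetic remarks: if $\mathcal{F} = \emptyset$ the disjunction is empty, so you should take $\Phi$ to be some contradictory combination such as $(c_q)_+ \land \lnot (c_q)_+$; and $v_j(c_q)$ counts occurrences of $q$ among $q_1,\dots,q_j$ only (the initial visit at position $0$ is not counted), which is harmless since only the infinity behaviour matters --- worth one sentence in a polished write-up.
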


For $w \in \{a,b\}^\ast$, let $[w]_a$ be the number of occurrences of symbol $a$ in $w$, and similarly for $[w]_b$.
Further for any $\omega$-word $x\in \Sigma^\omega$, let 
\[
\pre(x) := \{w \in \Sigma^\ast \ |\ (\exists y \in \Sigma^\omega) wy = x\}
\]
be the set of all finitely long prefixes of $x$.

\begin{lemma}\label{l:main:1}
The language
\begin{multline*}
\mathcal{L}_\omega := \{x \in \{a,b\}^\omega \ | \\
 (\exists m >0)(\forall w \in \pre(x)) \, | [w]_a-[w]_b | < m  \}
\end{multline*}
is a non-regular $\omega$-language that is accepted by some 1-counting automaton.
\end{lemma}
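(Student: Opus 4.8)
The statement has two parts: (i) $\mathcal{L}_\omega$ is accepted by some $1$-counting automaton, and (ii) $\mathcal{L}_\omega$ is not $\omega$-regular.

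For part (i) the plan is to build a $1$-counting automaton with a single state $\qin$ and a single counter $c_0$ over the alphabet $\{a,b\}$. On reading $a$ we increment $c_0$ (that is, $\delta(\qin,a) = (\qin,\{c_0\},\emptyset)$), and on reading $b$ we decrement it ($\delta(\qin,b) = (\qin,\emptyset,\{c_0\})$). Then in the unique run on an $\omega$-word $x$, the counter value after reading the prefix $w$ is exactly $[w]_a - [w]_b$. The acceptance condition should be $\Phi := \lnot c_+ \land \lnot c_-$, which by the remark after Definition~\ref{d:acc:1} holds iff the counter is both positively and negatively bounded, i.e. iff there is a single $m$ with $|[w]_a - [w]_b| < m$ for all prefixes $w$. (A small point to check: the definition of $\models c_+$ quantifies over $j > 0$, excluding the initial configuration, but since $v_0(c_0)=0$ this does not affect boundedness; and boundedness above together with boundedness below yields a single bound on the absolute value after adjusting the constant.) Hence this automaton accepts exactly $\mathcal{L}_\omega$.

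For part (ii) the plan is a standard pumping argument for $\omega$-regular languages. Suppose for contradiction that $\mathcal{L}_\omega$ is $\omega$-regular, hence accepted by a deterministic Muller (or B\"uchi) automaton $B$ with $N$ states. Consider the $\omega$-word $x = a^N b^N a^N b^N \cdots$, which lies in $\mathcal{L}_\omega$ since every prefix has $|[w]_a - [w]_b| \le N < N+1$. Since $B$ accepts $x$, inside the first block $a^N$ the run of $B$ repeats a state, say after reading $a^{i}$ and again after reading $a^{i+p}$ with $0 < p \le N$; I can then pump this $a$-loop to obtain, for every $t$, an accepting run on $a^{i}(a^{p})^{t}\,b^N a^N b^N\cdots$. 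But for large $t$ the prefix $a^{i+pt}$ has $[w]_a - [w]_b = i + pt$ unbounded over the chosen word (the later blocks cannot compensate an already arbitrarily large surplus of $a$'s, since each subsequent prefix difference stays at least $i + pt - N$), so this pumped word is not in $\mathcal{L}_\omega$ — contradicting that $B$ accepts it. Therefore $\mathcal{L}_\omega$ is not $\omega$-regular.

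The main obstacle is making the pumping argument in part (ii) fully rigorous for $\omega$-automata: unlike the finite-word case, accepting runs depend on the set of states visited infinitely often, so I must ensure the pumped run still satisfies the Muller/B\"uchi acceptance condition. This is handled by pumping a loop that occurs within an initial finite segment and leaving the infinite tail of the run untouched, so the set of infinitely-visited states — and hence acceptance — is preserved; one should also note the $B$-blocks in the tail must be long enough (here exactly $N$, or one could take growing blocks) so that the pumped word is genuinely the concatenation producing an unbounded prefix difference. Alternatively, this non-regularity is already part of the cited result of Allred and Ultes-Nitsche~\cite{AllredU12,Ultes-NitscheA11}, so one may simply invoke it; but the self-contained pumping proof is short and worth including.
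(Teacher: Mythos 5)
Your part (i) is fine: the one-state automaton that increments the single counter on $a$ and decrements it on $b$, with acceptance condition $\lnot c_+ \land \lnot c_-$, does accept exactly $\mathcal{L}_\omega$, and this matches the construction the paper implicitly relies on (the paper itself gives no proof and simply cites~\cite{AllredU12,Ultes-NitscheA11}). The problem is part (ii): your pumping argument does not work, and the reason is intrinsic to the definition of $\mathcal{L}_\omega$. The bound $m$ is existentially quantified \emph{per word}, so $\mathcal{L}_\omega$ is closed under replacing any finite prefix by any other finite word; in particular, pumping a loop inside the first $a$-block can never produce a word outside $\mathcal{L}_\omega$. Concretely, for each fixed $t$ your pumped word is $a^{i}(a^{p})^{t}a^{N-i-p}b^{N}a^{N}b^{N}\cdots$ (note also the omitted segment $a^{N-i-p}$, which you must keep so that the tail of the run is the original tail), and \emph{all} its prefix differences are bounded by $N+p(t-1)$, so it lies in $\mathcal{L}_\omega$ and no contradiction arises. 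Your phrase ``unbounded over the chosen word'' conflates growth across the family of pumped words as $t\to\infty$ with unboundedness within a single word; only the latter would contradict membership.

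A correct self-contained argument must pump \emph{infinitely many} blocks of a single word. For instance, suppose a (nondeterministic) B\"uchi automaton with $N$ states accepts $\mathcal{L}_\omega$, and fix an accepting run on $(a^{N+1}b^{N+1})^{\omega}\in\mathcal{L}_\omega$. Inside each $a$-block the run repeats a state, giving a cycle reading $a^{p_k}$ with $p_k\geq 1$; splice one extra copy of this cycle into every block. The result is still a run, and since nothing is deleted, the accepting state is still visited infinitely often, so the new word is accepted; but its prefix difference after the $k$-th $a$-block is at least $N+1+k$, hence unbounded, so the word is not in $\mathcal{L}_\omega$ --- a genuine contradiction. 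Alternatively, do what the paper does and invoke~\cite{AllredU12,Ultes-NitscheA11} for non-regularity; but the single-loop pumping proof as you wrote it should not be included, since it proves nothing.
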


The language $\mathcal{L}_\omega$ in the above lemma contains all $\omega$-words~$x$ for which there exists $m >0$ such that in all prefixes $w$ of $x$ the number of occurrences of $a$ and the number of occurrences of $b$ does not differ by more than $m$.

\begin{theorem}
The class of $\omega$-languages that is accepted by $k$-counting automata is closed under Boolean operations.
\end{theorem}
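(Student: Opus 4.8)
The plan is to handle complementation separately from union and intersection, using crucially that $k$-counting automata are \emph{deterministic} and complete: since $\delta$ is a total function, every $\omega$-word $x \in \Sigma^\omega$ has exactly one run, which I denote $r_x$, and $x$ is accepted by an automaton with acceptance condition $\Phi$ precisely when $r_x \models \Phi$.

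\emph{Complement.} Given $A = (Q,C,\Sigma,\delta,\qin,\Phi)$, put $\overline{A} := (Q,C,\Sigma,\delta,\qin,\lnot\Phi)$. Since $\lnot\Phi$ is again a Boolean combination of the atoms $c_+,c_-$ ($c \in C$), the automaton $\overline{A}$ is well-formed. The word $x$ has the same unique run $r_x$ in $A$ and in $\overline{A}$, and $r_x \models \lnot\Phi$ iff $r_x \not\models \Phi$ by the usual Boolean semantics; hence $x \in L(\overline{A})$ iff $x \notin L(A)$, that is, $L(\overline{A}) = \Sigma^\omega \setminus L(A)$.

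\emph{Intersection.} Let $A_i = (Q_i,C_i,\Sigma,\delta_i,\qin^i,\Phi_i)$ for $i = 1,2$ be a $k_1$-counting automaton and a $k_2$-counting automaton over the same alphabet $\Sigma$; after renaming counters we may assume $C_1 \cap C_2 = \emptyset$. Define the product automaton
\[
A := (Q_1 \times Q_2,\; C_1 \cup C_2,\; \Sigma,\; \delta,\; (\qin^1,\qin^2),\; \Phi_1 \land \Phi_2),
\]
where, if $\delta_i(p_i,a) = (q_i,C^i_+,C^i_-)$ for $i = 1,2$, then
\[
\delta\bigl((p_1,p_2),a\bigr) := \bigl((q_1,q_2),\; C^1_+ \cup C^2_+,\; C^1_- \cup C^2_-\bigr).
\]
The side condition survives: $(C^1_+ \cup C^2_+) \cap (C^1_- \cup C^2_-) = \emptyset$, since $C^i_+ \cap C^i_- = \emptyset$ and $C_1 \cap C_2 = \emptyset$. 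A straightforward induction on the step index $j$ shows that the unique run $r$ of $A$ on $x$ projects, componentwise, onto the unique runs $r_1,r_2$ of $A_1,A_2$ on $x$: the state of $r$ at step $j$ is the pair of the states of $r_1$ and $r_2$, and for $c \in C_i$ the value of $c$ at step $j$ in $r$ agrees with its value at step $j$ in $r_i$, because the increments and decrements affecting such a $c$ in $A$ are exactly those prescribed by $A_i$ (as $c$ does not lie in the other automaton's counter set). Consequently $r \models c_+$ iff $r_i \models c_+$, and $r \models c_-$ iff $r_i \models c_-$, for every $c \in C_i$, since in each case both sides unwind to the identical boundedness statement about the identical sequence of integers; by induction on the structure of the Boolean combination, $r \models \Phi_i$ iff $r_i \models \Phi_i$. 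Hence $x \in L(A)$ iff $r_1 \models \Phi_1$ and $r_2 \models \Phi_2$, i.e.\ iff $x \in L(A_1) \cap L(A_2)$.

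\emph{Union.} Either rerun the product construction with acceptance condition $\Phi_1 \lor \Phi_2$, or appeal to $L(A_1) \cup L(A_2) = \Sigma^\omega \setminus \bigl((\Sigma^\omega \setminus L(A_1)) \cap (\Sigma^\omega \setminus L(A_2))\bigr)$ together with the two cases already established. I do not expect a genuine obstacle here; the only points requiring care are the disjointness bookkeeping for $C_1$ and $C_2$, the check that the product transition still satisfies $C_+ \cap C_- = \emptyset$, and the run-projection induction. It is this last step that really exploits determinism: because $r_i$ is \emph{the} run of $A_i$, the projection of the product run must coincide with it — for a nondeterministic counting model the analogous argument would break down for complementation, which is where the difficulty would normally reside.
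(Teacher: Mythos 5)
Your proof is correct. Note, however, that the paper itself contains no proof of this theorem: it only states it and refers to Allred and Ultes-Nitsche \cite{AllredU12,Ultes-NitscheA11}, so there is no in-paper argument to compare against. Your route --- using determinism and totality of $\delta$ to get a unique run and complementing by negating the acceptance condition $\Phi$, and a product construction with disjoint counter sets and acceptance condition $\Phi_1 \land \Phi_2$ (resp.\ $\Phi_1 \lor \Phi_2$) for intersection (resp.\ union) --- is exactly the standard argument used in those references; the only point worth making explicit is that the product is a $(k_1+k_2)$-counting automaton, which is harmless since the class in the theorem ranges over all $k$.
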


\section{The logic $\ltlp$}

\subsection{Language}

The language of $\ltlp$ is the usual language of linear time temporal logic extended with a new binary temporal operator~$\more$. We read $A \more B$ as \emph{$A$ is dominated by $B$}.

We start with a set~$\Prop$ of atomic  proposition and use $P$ (possibly with subscript) to denote elements of $\Prop$.

The set of \emph{formulas} of $\ltlp$ is now inductively defined by
\begin{enumerate}
\item each atomic proposition is a formula;
\item if $A$ is a formula, then so is $\lnot A$;
\item if $A$ and $B$ are formulas, then so is $A \land B$;
\item if $A$ is a formula, then so is $\next A$;
\item if $A$ and $B$ are formulas, then so is $A \until B$;
\item if $A$ and $B$ are formulas, then so is $A \more B$.
\end{enumerate}

As usual, we set $\top := P \lor \lnot P$ for some fixed atomic proposition $P$. Further we define 
\[
\Diamond B := \top \until B \quad\text{and}\quad \Box B := \lnot \Diamond \lnot B
\rlap{\enspace.}
\]

If we drop the binary operator $\more$, then we obtain the usual linear the temporal logic $\ltl$.

\subsection{Semantics}

A \emph{model}, denoted $\sigma$, is an $\omega$-word over the alphabet $2^\Prop$, that is an $\omega$-word of subsets of $\Prop$. The symbol at position~$i$ is denoted by $\sigma_i$ and $\sigma_0$ denotes the first symbol of~$\sigma$.

We now define the relation $\models$ between models, natural numbers, and formulas by:
\begin{enumerate}
\item 
$\sigma, i \models P$ \quad if{f} \quad$P \in \sigma_i$ \qquad for $P\in \Prop$;
\item 
$\sigma, i \models \lnot A$ \quad if{f} \quad not $\sigma, i \models A$;
\item 
$\sigma, i \models A \land B$\quad  if{f} \quad  $\sigma, i \models A$ and $\sigma, i \models B$;
\item 
$\sigma, i \models \next A$\quad  if{f} \quad $\sigma, i+1 \models A$;
\item
$\sigma, i \models A \until B$ \quad if{f}  
\[
\exists j \geq i \big(\sigma, j \models B \text{ and } \forall i \leq k < j(\sigma,k\models A)\big)
\rlap{\enspace;}
\]
\item
$\sigma, i \models A \more B$ \quad if{f} 
\[
\forall b \exists j \big(\card(A_\sigma^{i,j}) + b \leq \card(B_\sigma^{i,j}) \big)
\]
where 
\[
A_\sigma^{i,j} := \{ k \in \omega \ |\ i\leq k \leq j \text{ and } \sigma,k\models A\}
\]
and  $\card(X)$ denotes the cardinality of a set $X$. 
\end{enumerate}
Given the interpretation of $\until$ as \emph{until}-operator and
recalling the definitions of the unary operators $\Box$ and $\Diamond$, we see that the formula~$\Diamond A$ means \emph{eventually $A$} and $\Box A$ means \emph{always~$A$}.

Let us briefly discuss the interpretation of $A \more B$.
We find that $A \more B$ holds  in a state $i$ if for any bound $b$ there is an interval $[i,j]$ such that the difference of the number of states in $[i,j]$ in which $B$ holds and the number of states in~$[i,j]$ in which $A$ holds is greater than or equal to $b$. Roughly we could say that $A \more B$ means that $B$ holds infinitely more often than~$A$.

Interesting is the negation of $A \more B$. We find that 
\[
\sigma, i \models \lnot (A \more B)
\]
if{f}
\[
\exists b \forall j \big(\card(A_\sigma^{i,j}) + b > \card(B_\sigma^{i,j}) 
\big)
\rlap{\enspace.}
\]
That means $\lnot (A \more B)$ is the case if there exists a bound $b$ such that the difference of the number of states in which $B$ holds and the number of states in which $A$ holds never exceeds $b$.

We can define a binary temporal operator $\same$ by
\[
A \same B := \lnot (A \more B) \land \lnot (B \more A)
\rlap{\enspace.}
\]
Given the above interpretation of $\lnot (A \more B)$, we can think of $A \same B$ as 
\[
\text{$A$ and $B$ hold almost equally often}.
\]
This reading is supported by the observation that $\top \same A$ means \emph{eventually always~$A$}. Indeed, we have the following lemma.

\begin{lemma}
The formula $\top \same A$ is  equivalent to the formula~$\Diamond\Box A$.
\end{lemma}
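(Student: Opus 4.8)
The plan is to prove the equivalence $\sigma, i \models \top \same A$ iff $\sigma, i \models \Diamond\Box A$ by unfolding both sides and relating them through the counting function associated with $\lnot A$. First, I would observe that since $\top$ holds in every state, we have $\card((\top)_\sigma^{i,j}) = j - i + 1$, and therefore $\card((\top)_\sigma^{i,j}) - \card(A_\sigma^{i,j})$ equals the number of states $k$ in $[i,j]$ with $\sigma, k \models \lnot A$; call this quantity $n_{\lnot A}(i,j)$. Similarly $\card(A_\sigma^{i,j}) - \card((\top)_\sigma^{i,j}) = -n_{\lnot A}(i,j) \leq 0$, so $\lnot(A \more \top)$ holds automatically in every state (taking $b = 1$ suffices). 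Hence $\top \same A$ reduces to $\lnot(\top \more A)$, which by the semantics of $\more$ (and the displayed negation in the excerpt) says: there exists $b$ such that for all $j$, $n_{\lnot A}(i,j) < b$; that is, the total number of states $k \geq i$ with $\sigma, k \models \lnot A$ is finite.

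Next I would unfold $\Diamond\Box A$. By the definitions $\Diamond B = \top \until B$ and $\Box B = \lnot \Diamond \lnot B$, one checks the standard facts that $\sigma, i \models \Box A$ iff $\sigma, k \models A$ for all $k \geq i$, and $\sigma, i \models \Diamond C$ iff $\sigma, j \models C$ for some $j \geq i$. Combining these, $\sigma, i \models \Diamond\Box A$ iff there is some $j \geq i$ such that $\sigma, k \models A$ for all $k \geq j$, i.e., iff only finitely many positions $k \geq i$ satisfy $\lnot A$. This is exactly the condition obtained above for $\lnot(\top \more A)$, and hence for $\top \same A$, which completes the argument.

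The main obstacle, such as it is, lies in the bookkeeping step that bridges "the difference $n_{\lnot A}(i,j)$ stays bounded over all $j$" and "only finitely many positions satisfy $\lnot A$." Since $n_{\lnot A}(i,j)$ is nondecreasing in $j$ and increases by exactly one at each new $\lnot A$-position, a uniform bound $b$ forces the set of such positions to have size at most $b - 1$; conversely, finiteness of that set gives a bound. I would state this monotonicity observation explicitly, as it is the one place where the combinatorial content actually appears. Everything else is a routine expansion of the LTL abbreviations $\Diamond$ and $\Box$, which I would carry out briefly rather than citing folklore, to keep the note self-contained.
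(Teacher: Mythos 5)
Your overall strategy is exactly the paper's: one conjunct of $\top \same A$ is trivially true, and the other asserts that only finitely many positions from $i$ onward satisfy $\lnot A$, which is then matched with $\Diamond\Box A$. However, you have interchanged the two conjuncts, and under the paper's semantics this makes two of your intermediate claims false as stated. The definition reads $\sigma,i \models A \more B$ iff $\forall b\,\exists j\,\big(\card(A_\sigma^{i,j}) + b \leq \card(B_\sigma^{i,j})\big)$, so it is $B$ that must exceed $A$ by arbitrarily large margins ($A$ is dominated by $B$). Consequently the unsatisfiable formula, whose negation holds automatically with $b=1$, is $\top \more A$, not $A \more \top$. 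For $\lnot(A \more \top)$ to hold at $i$ one needs a bound $b$ with $\card(A_\sigma^{i,j}) + b > j-i+1$ for all $j$, i.e.\ at most $b-1$ positions $k \geq i$ with $\sigma,k \models \lnot A$; this is precisely the nontrivial conjunct (the paper even remarks after the lemma that $A \more \top$ is equivalent to $\Box\Diamond\lnot A$). Your claim that $\lnot(A \more \top)$ holds in every state fails concretely: take $\sigma$ in which $A$ is nowhere true; then $\card(A_\sigma^{i,j})=0$ while $\card(\top_\sigma^{i,j})=j-i+1$ is unbounded, so $A \more \top$ holds. Dually, $\lnot(\top \more A)$ is the trivially valid conjunct and does not say that $n_{\lnot A}(i,j)$ stays below $b$.

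Because $\same$ is symmetric, the mix-up does not damage the architecture of the argument, only the labels: interchange $\top \more A$ and $A \more \top$ throughout and your proof becomes correct and essentially coincides with the paper's, which reduces $\top \same A$ to $\lnot(A \more \top)$ and then rewrites $\card(A_\sigma^{i,j}) > (j+1-i)-b$ into ``$A$ holds at all sufficiently late positions'' --- in effect your monotonicity observation about $n_{\lnot A}(i,j)$. Your bridge between a uniform bound on $n_{\lnot A}(i,j)$ and finiteness of the set of $\lnot A$-positions, and your unfolding of $\Diamond\Box A$ from the abbreviations $\Diamond$ and $\Box$, are both fine. So the idea is right, but as written the step ``$\lnot(A \more \top)$ holds automatically, taking $b=1$'' is false under the paper's definition of $\more$, and the ensuing reduction of $\top \same A$ to $\lnot(\top \more A)$ rests on it; you need to swap the two formulas to repair the proof.
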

\begin{proof}
First we observe that $\top \more A$ is not satisfiable.
Therefore, and because of 
\[
\top \same A = \lnot (\top \more A) \land \lnot (A \more \top)
\rlap{\enspace,}
\]
we obtain that 
\[
\text{$\top \same A$ is  equivalent to $\lnot (A \more \top)$.}
\]
Hence it is enough to show that $\lnot (A \more \top)$ is equivalent to $\Diamond \Box A$.
We find that
$
\sigma, i \models \lnot (A \more \top)
$
if and only if
\[
\exists b>0 \forall j \big(\card(A_\sigma^{i,j}) + b > \card(\top_\sigma^{i,j}) 
\big)
\]
if and only if
\[
\exists b>0 \forall j \big(\card(A_\sigma^{i,j})  > (j+1-i)-b
\big)
\]
if and only if (choose $b' = i+b-1$)
\[
\exists b' \geq i \forall j \big(\card(A_\sigma^{i,j})  > j-b'
\big)
\]
if and only if (for the if-direction observe that $j<b'$ implies $\card(A_\sigma^{b',j}) = 0 \geq j-b'$)
\[
\exists b' \geq i \forall j'\geq b'  \big(\card(A_\sigma^{b',j'})  > j'-b'
\big)
\]
if and only if 
\[
\exists b' \geq i \forall j'\geq b' (\sigma, j' \models A)
\]
if and only if $\sigma, i \models \Diamond \Box A$.
\end{proof}

We now immediately see that
$A \more \top$ is equivalent to $\Box\Diamond \lnot A$, which relates to the $\forall\exists$ quantification pattern in the interpretation of the $\more$-operator.

\section{Beyond $\ltl$}

In this section, we show that $\ltlp$ is strictly more expressive than $\ltl$.
First, we define 
the language~$\mathcal{L}_A$ of a formula~$A$ to consists of all $\omega$-words that are models of $A$. Formally we define this as follows.

\begin{definition}
Let $A$ be a formula with atomic propositions $P_1,\ldots,P_n$. 
The \emph{language~$\mathcal{L}_A$ of $A$} is the $\omega$-language over the alphabet $\Sigma := 2^{\{P_1,\ldots,P_n\}}$ given by
\[
\mathcal{L}_A := \{ \sigma \in \Sigma^\omega \ |\ \sigma,0 \models A\}
\rlap{\enspace.}
\]
\end{definition}

Now we establish that there is an $\ltlp$-formula $A$ such that there is no $\ltl$-formula $B$ that has the same language as~$A$.
Thus  $\ltlp$ is strictly more expressive than $\ltl$.

\begin{theorem}
There is a formula $A$ of\/ $\ltlp$ such that for all formulas $B$ of\/ $\ltl$ we have
\[
\mathcal{L}_A \neq \mathcal{L}_B
\rlap{\enspace.}
\]
\end{theorem}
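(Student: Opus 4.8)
The plan is to exhibit a concrete $\ltlp$-formula whose language is non-regular, and then invoke the classical fact that every $\ltl$-definable $\omega$-language is $\omega$-regular; since $\mathcal{L}_B$ is always regular for $B \in \ltl$, no such $B$ can match a non-regular $\mathcal{L}_A$. The natural candidate is built from the $\same$-operator, since that is where the counting power of the logic lives. Working over two atomic propositions $P_a, P_b$, I would first force, with an ordinary $\ltl$-formula, that in every position exactly one of $P_a, P_b$ holds — e.g. $\Box\big((P_a \lor P_b) \land \lnot(P_a \land P_b)\big)$ — so that models correspond to $\omega$-words over the effective two-letter alphabet $\{a,b\}$. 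Then I would take $A$ to be the conjunction of this formatting formula with $P_a \same P_b$.

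The key step is to compute $\mathcal{L}_A$ and recognize it as (essentially) the language $\mathcal{L}_\omega$ of Lemma~\ref{l:main:1}. Unwinding the semantics: $\sigma,0 \models P_a \same P_b$ means $\lnot(P_a \more P_b)$ and $\lnot(P_b \more P_a)$ both hold at $0$, which by the negation computation in the text says there is a bound $b_1$ with $\card((P_a)_\sigma^{0,j}) + b_1 > \card((P_b)_\sigma^{0,j})$ for all $j$, and symmetrically a bound $b_2$ in the other direction. On a word where exactly one of $a,b$ holds at each position, $\card((P_a)_\sigma^{0,j})$ and $\card((P_b)_\sigma^{0,j})$ are exactly $[w]_a$ and $[w]_b$ for the length-$(j{+}1)$ prefix $w$; so the two conditions together say $\big|\,[w]_a - [w]_b\,\big| < \max(b_1,b_2)$ for every prefix $w$, which is precisely the defining condition of $\mathcal{L}_\omega$. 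Hence $\mathcal{L}_A$ is (under the obvious identification of singleton letters $\{P_a\},\{P_b\}$ with $a,b$, and discarding the other two letters of $2^{\{P_a,P_b\}}$, which no model of $A$ uses) equal to $\mathcal{L}_\omega$.

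Finally I would argue non-regularity: if some $\ltl$-formula $B$ had $\mathcal{L}_B = \mathcal{L}_A$, then since $\ltl$-definable languages are $\omega$-regular and $\omega$-regular languages are closed under intersection with regular languages and under the alphabet restriction used above, $\mathcal{L}_\omega$ itself would be $\omega$-regular, contradicting Lemma~\ref{l:main:1}. A small amount of care is needed at exactly this junction — matching the alphabet of $B$ (which is $2^{\{P_1,\dots,P_n\}}$ for whatever propositions $B$ uses) with the alphabet of $A$, and checking that the "exactly one of $P_a,P_b$" restriction and the letter-renaming preserve $\omega$-regularity — but these are routine closure facts.

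I expect the main obstacle to be purely bookkeeping: being careful that "$\ltl$-definable implies $\omega$-regular" is available (it is the standard translation of $\ltl$ into Büchi automata, which I would cite), and that the reduction from a hypothetical regular $\mathcal{L}_B$ down to a regular $\mathcal{L}_\omega$ genuinely goes through despite the different alphabets and the auxiliary formatting conjunct. There is no deep step — the content is entirely front-loaded into Lemma~\ref{l:main:1} and the semantics of $\same$ — so the proof is short once the candidate formula $A := \Box\big((P_a \lor P_b)\land\lnot(P_a\land P_b)\big) \land (P_a \same P_b)$ is in hand.
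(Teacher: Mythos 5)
Your proposal is correct and follows essentially the same route as the paper: the paper takes the single-proposition formula $\lnot((\lnot P)\more P)$ over the alphabet $2^{\{P\}}$, identifies its language with the non-regular $\mathcal{L}_\omega$ of Lemma~\ref{l:main:1}, and concludes via the fact that every $\ltl$-definable language is (star-free) $\omega$-regular, so your only additional work is the harmless alphabet bookkeeping caused by using two propositions and the $\same$-operator. If anything, your two-sided formula matches $\mathcal{L}_\omega$ more faithfully than the paper's formula, whose semantic unfolding only bounds $[w]_a-[w]_b$ from above rather than in absolute value.
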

\begin{proof}
Let $A$ be the $\ltlp$-formula $\lnot ((\lnot P) \more P)$. Then $\mathcal{L}_A$ is a language over the alphabet $\Sigma := 2^{\{P\}}$. We let $a:=\{P\}$ and $b:=\emptyset$. We find that
\[
\sigma \in \mathcal{L}_A 
\quad\text{if{f}}\quad
\exists b \forall j \big(\card(A_\sigma^{0,j}) + b > \card(B_\sigma^{0,j}) 
\big)
\rlap{\enspace.}
\]
Hence $\mathcal{L}_A  = \mathcal{L}_\omega$ where $\mathcal{L}_\omega$ is as in Lemma~\ref{l:main:1}.
Therefore, we have that 
\begin{equation}\label{eq:th:1}
\text{
$\mathcal{L}_A$ is a non-regular $\omega$-language.
}
\end{equation}
For every $\ltl$-formula $B$ we know that 
\begin{equation}\label{eq:th:2}
\text{
$\mathcal{L}_B$ is a star-free regular $\omega$-language, 
}
\end{equation}
see for instance~\cite{thomas90,Ladner1977281}.

By \eqref{eq:th:1} and \eqref{eq:th:2} we conclude that for each $\ltl$-formula $B$ we have $\mathcal{L}_A \neq \mathcal{L}_B$.
\end{proof}

\section{Conclusion}

We  have introduced the logic of temporal domination $\ltlp$. After defining its language and semantics, we have established that  $\ltlp$ is a proper extension of linear time temporal logic $\ltl$. This results makes essential use of the close relationship between the logic of temporal domination~$\ltlp$ and $k$-counting automata, which recognize a proper extension of regular $\omega$-languages.

Of course, this small note is only a first step in exploring the logic of temporal domination. There are several interesting open question that will be addressed in future work.
 
\begin{enumerate}
\item  
Develop a (finitary or infinitary) deductive system for $\ltlp$.
\item Establish soundness and completeness for that deductive system.
\item Show that $\ltlp$ is decidable. This is probably  closely related to the problem of showing that the emptiness test for $k$-counting automata is computable.
\item Study the model checking problem for $\ltlp$. What is its complexity? Can it be implemented efficiently?
\item Examine the exact relationship between $\ltlp$ and $k$-counting automata. Do they have the same expressive power?
\end{enumerate}

\section*{Acknowledgements}
We thank the anonymous referees for many helpful comments.
This work is supported by the Swiss National Science Foundation grant 200021\_165549.

\end{document}